\theoremstyle{definition}
\newtheorem{definition}{Definition}[section]
\newtheorem*{notation*}{Notation}
\theoremstyle{plain}
\newtheorem{proposition}[definition]{Proposition}
\newtheorem{lemma}[definition]{Lemma}
\newtheorem{theorem}[definition]{Theorem}
\newcommand{\abs}[3]{\lambda {#1}^{#2}. #3}
\newcommand{\FV}{\mathit{FV}}
\newcommand{\plug}[2]{#1 \langle #2 \rangle}
\newcommand{\emptyCtxt}{\langle \cdot \rangle}
\newcommand{\appGen}[2]{#1 \mathbin{\$} #2}
\newcommand{\appLazy}[2]{#1 \mathbin{@} #2}
\newcommand{\appLR}[2]{#1 \mathbin{\overrightarrow{@}} #2}
\newcommand{\appRL}[2]{#1 \mathbin{\overleftarrow{@}} #2}
\newcommand{\W}[1]{\llparenthesis #1 \rrparenthesis}
\renewcommand{\oc}{\mathop{!}}
\newcommand{\twoheadmapsto}{
\mathrel{\ooalign{$\twoheadrightarrow$\cr%
\kern-.15ex\raisebox{.2ex}{\scalebox{1}[0.8]{$\shortmid$}}\cr}}}
\newcommand{\longtwoheadmapsto}{
\mathrel{\ooalign{$\longtwoheadrightarrow$\cr%
\kern-.15ex\raisebox{.2ex}{\scalebox{1}[0.8]{$\shortmid$}}\cr}}}
\newcommand{\up}{\mathord{\uparrow}}
\newcommand{\dn}{\mathord{\downarrow}}
\newcommand{\lb}[1]{\mathit{#1}}
\newcommand{\Init}{\mathit{Init}}
\newcommand{\Final}{\mathit{Final}}
\newcommand{\ev}{\mathit{Eval}}
\newcommand{\ex}{\mathit{Exec}}
\title{
Efficient Implementation of Evaluation Strategies \\
via Token-Guided Graph Rewriting
}
\author{Koko Muroya \quad\qquad Dan R. Ghica
\institute{University of Birmingham, UK}
\email{\{k.muroya,d.r.ghica\}@cs.bham.ac.uk}
}
\begin{document}
\maketitle

\begin{abstract}
 In implementing evaluation strategies of the lambda-calculus, both correctness and 
 efficiency of implementation are valid concerns.
 While the notion of correctness is determined by the evaluation
 strategy, regarding efficiency there is a larger design space that can be explored, in
 particular the trade-off between space  versus time efficiency.
 We contributed to the study of this trade-off by the 
 introduction of an abstract machine for  call-by-need, inspired by 
 Girard's Geometry of Interaction, a machine combining token passing and graph rewriting. 
 This work presents an extension of the machine, to
 additionally accommodate  left-to-right and right-to-left
 call-by-value strategies.
 We show soundness and completeness of the extended machine with
 respect to each of the call-by-need and two call-by-value strategies.
 Analysing time cost of its execution classifies the machine as
 ``efficient'' in Accattoli's taxonomy of abstract machines.
\end{abstract}


\section{Introduction}

The lambda-calculus is a simple yet rich model of computation, relying on 
a single mechanism to activate a function in
computation, beta-reduction, that replaces function arguments with
actual input.
While in the lambda-calculus itself 
beta-reduction can be applied in an unrestricted way, it is evaluation strategies that determine
the way beta-reduction is applied when the lambda-calculus is used as a programming language.
Evaluation strategies often imply how intermediate results are copied, discarded,
cached or reused.
For example, everything is repeatedly evaluated as many times as
requested in the call-by-name strategy.
In the call-by-need strategy, once a function requests its input, the
input is evaluated and the result is cached for later use.
The call-by-value strategy evaluates function input and caches the
result even if the function does not require the input.

The implementation of any evaluation strategy must be
correct, first of all, i.e.\ it has to produce results as stipulated by the
strategy.
Once correctness is assured, the next concern is efficiency.
One may prefer better space efficiency, or better time
efficiency, and it is well known that one can be traded off for the other.
For example, time efficiency can be improved by caching more
intermediate results, which increases space cost.
Conversely, bounding space requires repeating computations, which adds
to the time cost.
Whereas  correctness is well defined for any evaluation strategy, there
is a certain freedom in managing efficiency.
The challenge here is how to produce a unified framework which is flexible 
enough to analyse and guide the choices required by this trade-off.
Recent studies by Accattoli et al.\
\cite{AccattoliDL16,AccattoliBM14,Accattoli16} clearly establish classes of efficiency
for a given evaluation strategy.
They characterise efficiency by means of the
number of beta-reduction applications required by the strategy, and
introduce two efficiency classes, namely ``efficient''
and ``reasonable.''
The expected efficiency of an abstract machine gives us a starting point to quantitatively
analyse the trade-offs required in an implementation.


We employ Girard's Geometry of Interaction (GoI)
\cite{Girard89GoI1}, a semantics of linear logic proofs, 
as a framework for  studying the trade-off
between time and space efficiency.
In particular we focus on GoI-style abstract machines for the
lambda-calculus, pioneered by Danos and Regnier~\cite{DanosR96} and
Mackie~\cite{Mackie95}.
These machines evaluate a term of the lambda-calculus by 
translating the term to a graph, a network of simple transducers, which 
executes by passing a data-carrying token around.

The token simulates graph rewriting without actually rewriting, which
is in fact a particular instance of the trade-off we mentioned above.
The token-passing machines keep the underlying graph fixed
and use the data stored in the token to route it.
They therefore favour space efficiency at the cost of time
efficiency.
The same computation is repeated when, instead, intermediate
results could have been cached by saving copies of certain sub-graphs representing values.

Our intention is to lift the GoI-style token passing to a framework to
analyse the trade-off of efficiency, by strategically interleaving it with
graph rewriting.
The key idea is that the token holds control over graph rewriting, by visiting
redexes and triggering rewrite rules.
Graph rewriting offers fine control over caching and sharing
intermediate results, however fetching cached results can increase the size of the
graph.
In short, introduction of graph rewriting sacrifices  space
 while favouring time
efficiency.
We expect the flexibility given by a fine-grained control over interleaving will enable 
a careful balance between
space  and time efficiency.

This idea was first introduced in our previous work~\cite{MuroyaG17},
by
developing an abstract machine that interleaves token passing with as
much graph rewriting as possible.
We showed the resulting graph-rewriting abstract machine implements
call-by-need evaluation, and it is classified as ``efficient'' in terms
of time.
We further develop this idea by proposing an extension of the graph-rewriting abstract
machine, to accommodate other evaluation strategies, namely 
left-to-right and right-to-left call-by-value.
In our framework, both call-by-value strategies involve similar tactics for caching
intermediate results as the call-by-need strategy, with the only
difference being the timing of cache creation.

\emph{Contributions.}
We extend the token-guided graph-rewriting abstract
machine for the call-by-need strategy~\cite{MuroyaG17} to the
left-to-right and right-to-left call-by-value strategies.
The presentation of the machine is revised by using term
graphs instead of proof nets~\cite{Girard87LL}, to make clearer sense
of evaluation strategies in the graphical representation of terms.
The extension is by introducing nodes that correspond to
different evaluation strategies, rather than modifying the behaviour
of existing nodes to suite different evaluation strategy demands.
We prove the soundness and completeness of the extended machine with
respect to the three evaluation strategies separately, using a ``sub-machine'' semantics, where the word
`sub' indicates both a focus on substitution and its status as an 
intermediate representation.
The sub-machine semantics is based on Sinot's ``token-passing''
semantics~\cite{Sinot05,Sinot06} that makes explicit the two main tasks of
abstract machines: searching  redexes and substituting  variables.
The time-cost analysis classifies the machine as ``efficient'' in
Accattoli's taxonomy of abstract machines~\cite{Accattoli16}.
Finally, an on-line visualiser is implemented, in which the machine
can be executed on arbitrary closed lambda-terms\footnote{
Link to the on-line visualiser: \url{https://koko-m.github.io/GoI-Visualiser/}}.




\section{A Term Calculus with Sub-Machine Semantics}
\label{sec:TermCalculus}

We use an untyped term calculus that accommodates three
evaluation strategies of the lambda-calculus, by dedicated constructors for function application: namely,
$\appLazy{}{}$ (call-by-need), $\appLR{}{}$ (left-to-right
call-by-value) and $\appRL{}{}$ (right-to-left call-by-value).
The term calculus uses all strategies so that we do not have to present three 
almost identical calculi. But we are not interested in their
interaction, but in each strategy separately.
In the rest of the paper, we therefore assume that each term contains
function applications of a single strategy.
As shown in the top of Fig.~\ref{fig:MockMachine},
the calculus accommodates explicit substitutions $[x \leftarrow u]$.
A term with no explicit substitutions is said to be ``pure.''

The sub-machine semantics is used to establish the soundness of the graph-rewriting
abstract machine.
It is an adaptation of Sinot's lambda-term rewriting system~\cite{Sinot05,Sinot06}, 
used to analyse a token-guided rewriting system for interaction nets.
It imitates an abstract machine by explicitly searching
for a redex and decomposing the meta-level substitution into on-demand
linear substitution, 
also resembling a storeless abstract machine (e.g.~\cite[Fig.~8]{DanvyMMZ12}).
However the semantics is still too ``abstract'' as an abstract
machine, in the sense that it works modulo alpha-equivalence to
avoid variable captures.

Fig.~\ref{fig:MockMachine} defines the sub-machine semantics of our
calculus.
It is given by labelled relations between \emph{enriched} terms
$\plug{E}{\W{t}}$.
In an enriched term $\plug{E}{\W{t}}$, a sub-term $t$ is not plugged directly into the evaluation 
context, but into a ``window'' $\W{\cdot}$ which makes it syntactically 
obvious where the reduction context is situated. 
Forgetting the window turns an enriched term into an ordinary term.
Basic rules $\mapsto$ are labelled with $\beta$, $\sigma$ or
$\epsilon$.
The basic rules~(2), (5) and~(8), labelled with $\beta$, apply
beta-reduction and delay substitution of a bound variable.
Substitution is done one by one, and on demand, by the basic rule (10)
with label~$\sigma$.
Each application of the basic rule~(10) replaces exactly one bound
variable with a value, and keeps a copy of the value for later use.
All  other basic rules, with label $\epsilon$, search for a
redex by moving the window without changing the underlying term.
Finally, reduction is defined by congruence of basic rules with
respect to evaluation contexts, and labelled accordingly.
Any basic rules and reductions are indeed between enriched terms,
because the window $\W{\cdot}$ is never duplicated or discarded.
\begin{figure}[t]
\begin{align*}
 t, u &::= x \mid \abs{x}{}{t}
 \mid \appLazy{t}{u} \mid \appLR{t}{u} \mid \appRL{t}{u}
 \mid t[x \leftarrow u],
 \quad v ::= \abs{x}{}{t}
 \tag{terms, values} \\
 A &::= \emptyCtxt \mid A[x \leftarrow t]
 \tag{answer contexts} \\
 E &::= \emptyCtxt \mid \appLazy{E}{t}
 \mid \appLR{E}{t} \mid \appLR{\plug{A}{v}}{E}
 \mid \appRL{t}{E} \mid \appRL{E}{\plug{A}{v}}
 \mid E[x \leftarrow t] \mid \plug{E}{x}[x \leftarrow E]
 \tag{evaluation contexts}
\end{align*}
 Basic rules $\mapsto_\beta$, $\mapsto_\sigma$ and $\mapsto_\epsilon$:
 \vspace{-2ex} \\
 \begin{minipage}{.5\hsize}
  \begin{align}
  \W{\appLazy{t}{u}} &\mapsto_\epsilon \appLazy{\W{t}}{u} \\
  \appLazy{\plug{A}{\W{\abs{x}{}{t}}}}{u}
  &\mapsto_\beta \plug{A}{\W{t}[x \leftarrow u]} \\
  \W{\appLR{t}{u}} &\mapsto_\epsilon \appLR{\W{t}}{u} \\
  \appLR{\plug{A}{\W{\abs{x}{}{t}}}}{u}
  &\mapsto_\epsilon \appLR{\plug{A}{\abs{x}{}{t}}}{\W{u}} \\
  \appLR{\plug{A}{\abs{x}{}{t}}}{\plug{A'}{\W{v}}}
  &\mapsto_\beta \plug{A}{\W{t}[x \leftarrow \plug{A'}{v}]}    
  \end{align}
 \end{minipage}
 \begin{minipage}{.5\hsize}
  \begin{align}
  \W{\appRL{t}{u}} &\mapsto_\epsilon \appRL{t}{\W{u}} \\
  \appRL{t}{\plug{A}{\W{v}}}
  &\mapsto_\epsilon \appRL{\W{t}}{\plug{A}{v}} \\
  \appRL{\plug{A}{\W{\abs{x}{}{t}}}}{\plug{A'}{v}}
  &\mapsto_\beta \plug{A}{\W{t}[x \leftarrow \plug{A'}{v}]} \\
  \plug{E}{\W{x}}[x \leftarrow \plug{A}{u}]
  &\mapsto_\epsilon \plug{E}{x}[x \leftarrow \plug{A}{\W{u}}] \\
  \plug{E}{x}[x \leftarrow \plug{A}{\W{v}}] &\mapsto_\sigma
  \plug{A}{\plug{E}{\W{v}}[x \leftarrow v]}
  \end{align}
 \end{minipage} \vspace{1ex}

 Reductions $\multimap_\beta$, $\multimap_\sigma$ and
 $\multimap_\epsilon$: \qquad
 \parbox[c]{4cm}{
  $\infer[(\chi \in \{ \beta,\sigma,\epsilon \})]
  {\plug{E}{\tilde{t}} \multimap_\chi \plug{E}{\tilde{u}}}
  {\tilde{t} \mapsto_\chi \tilde{u}}$
 }
 \caption{''Sub-Machine'' Operational Semantics}
 \label{fig:MockMachine}
\end{figure}

An \emph{evaluation} of a pure term $t$ (i.e.\ a term with no explicit
substitution) is a sequence of reductions starting from
$\plug{}{\W{t}}$, which is simply $\W{t}$.
In any evaluation, a sub-term in the window $\W{\cdot}$ is always pure.

\section{Token-Guided Graph-Rewriting Machine}\label{sec:Trans}


A graph is given by a set of nodes and a set of directed edges.
Nodes are classified into \emph{proper} nodes and \emph{link} nodes.
Each edge is directed, and at least one of its two endpoints is a link
node.
An \emph{interface} of a graph is given by two sets of link nodes,
namely \emph{input} and \emph{output}. 
Each link node is a source of at most one edge, and a target of at
most one edge.
Input links are the only links that are not a target of any edge, and
output links are the only ones that are not a source of any edge.
When a graph $G$ has $n$ input link nodes and $m$ output link nodes,
we sometimes write $G(n,m)$ to emphasise its interface.
If a graph has exactly one input, we refer to the input link node as
``root.''

The idea of using link nodes, as distinguished from proper nodes, 
comes from a graphical formalisation of string
diagrams~\cite{KissingerPhD}.
String diagrams consist of ``boxes'' that are connected to each other
by ``wires.''
In the formalisation, boxes are modelled by ``box-vertices''
(corresponding to proper nodes in our case), and wires are modelled by
consecutive edges connected via ``wire-vertices''
(corresponding to link nodes in our case).
The segmentation of wires into edges can introduce an arbitrary number
of consecutive link nodes, however these consecutive link nodes are
identified by the notion of ``wire homeomorphism.''
We will later discuss these consecutive link nodes, from the
perspective of the graph-rewriting machine.
From now on we simply call a proper node ``node,'' and a link node
``link.''

In drawing graphs, we follow the  convention that
input links are placed at the bottom and output links are at the top,
and links are usually not drawn explicitly.
The latter point means that edges are simply drawn from a node to a node,
with intermediate links omitted.
In particular if an edge is connected to an interface link, the edge
is drawn as an open edge missing an endpoint.
Additionally, we use a bold-stroke edge/node to represent a bunch of
parallel edges/nodes.

Nodes are labelled, and a node with a label $X$ is called an
``$X$-node.''
We use two sorts of labels.
One sort corresponds to the constructors of the calculus presented in
Sec.~\ref{sec:TermCalculus}, namely $\lambda$
(abstraction), $\appLazy{}{}$ (call-by-need application), $\appLR{}{}$
(left-to-right call-by-value application) and $\appRL{}{}$
(right-to-left call-by-value application).
These three application nodes are the novelty of this work.
The token, travelling in a graph, reacts to these nodes in
different ways, and hence implements different evaluation orders.
We believe that this is a more extensible way to accommodate different
evaluation orders, than to let the token react to the same node in
different ways depending on situation.
The other sort consists of $\oc$, $\wn$, $\lb{D}$ and $\lb{C}_n$ for
any natural number $n$, used in the management of copying sub-graphs.
This sort is inspired by proof nets of the multiplicative and
exponential fragment of linear logic~\cite{Girard87LL}, and
$\lb{C}_n$-nodes generalise the standard binary contraction and
incorporate  weakening.

\begin{wrapfigure}[6]{r}{.5\hsize}
 \centering
 \vspace{-2.5ex}
 \includegraphics[scale=0.8]{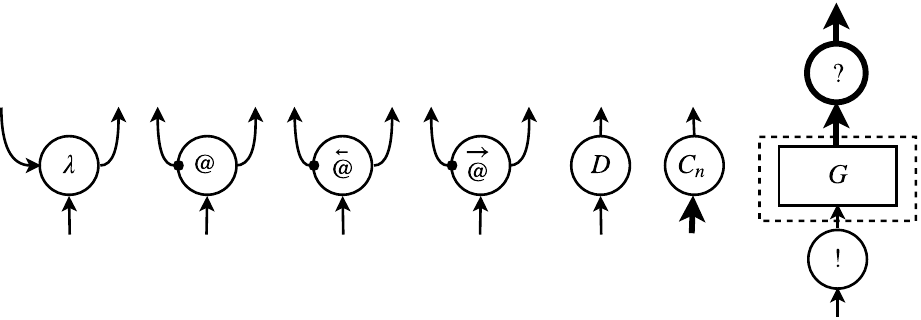}
 \vspace{-3ex}
 \caption{Graph Nodes}\label{fig:connection}
\end{wrapfigure}
The number of input/output and incoming/outgoing edges for a node is determined by the label, as 
indicated in Fig.~\ref{fig:connection}.
We distinguish two outputs of an application node
($\appLazy{}{}$, $\appLR{}{}$ or $\appRL{}{}$), calling one
``composition output'' and the other ``argument output''
(cf.~\cite{AccattoliG09}).
A bullet $\bullet$ in the figure specifies a function output.
The dashed box indicates a sub-graph $G(1,m)$
(``$\oc$-box'') that is connected to one $\oc$-node (``principal
door'') and $m$ $\wn$-nodes (``auxiliary doors'').
This $\oc$-box structure, taken from  proof nets, aids duplication of
sub-graphs by specifying those that can be copied\footnote{
Our formalisation of graphs is based on the view of proof nets as
string diagrams, and hence of $\oc$-boxes as functorial
boxes~\cite{Mellies06}. This allows dangling edges to be properly
modelled by link nodes~\cite{KissingerPhD},
which should not be confused with the terminology ``link'' of proof
nets.
}.

We define a graph-rewriting abstract machine as a labelled transition
system between \emph{graph states}.
\begin{definition}[Graph states]
 A \emph{graph state} $((G(1,0),e),\delta)$ is formed of a graph
 $G(1,0)$ with its distinguished link $e$, and token data
 $\delta = (d,f,S,B)$ that consists of:
   a \emph{direction} defined by
	$d ::= \up \mid \dn$,
   a \emph{rewrite flag} defined by
	$f ::= \square \mid \lambda \mid \oc$,
  a \emph{computation stack} defined by
	$S ::= \square \mid \star:S \mid \lambda:S \mid @:S$, and
  a \emph{box stack} defined by
	$B ::= \square \mid \star:B \mid \oc:B \mid
	\diamond:B \mid e':B$,
	where $e'$ is any link of the graph $G$.
\end{definition}
\noindent
The distinguished link $e$  is called the
``position'' of the token.
The token reacts to a node in a graph using its data, which determines its
path.
Given a graph $G$ with root $e_0$,
the \emph{initial} state $\Init(G)$ on it is given by
$((G,e_0),(\up,\square,\square,\star:\square))$, and the \emph{final}
state $\Final(G)$ on it is given by
$((G,e_0),(\dn,\square,\square,\oc:\square))$.
An \emph{execution} on a graph $G$ is a sequence of transitions
starting from the initial state $\Init(G)$.

Each transition
$((G,e),\delta) \to_\chi ((G',e'),\delta')$ between graph states is
labelled by either $\beta$, $\sigma$ or $\epsilon$.
Transitions are deterministic, and classified into \emph{pass}
transitions that search for redexes and trigger rewriting, and 
\emph{rewrite} transitions that actually rewrite a graph as soon as a
redex is found.

A pass transition
$((G,e),(d,\square,S,B)) \to_\epsilon ((G,e'),(d',f',S',B'))$,
always labelled with $\epsilon$, applies to a state whose rewrite flag
is $\square$.
It simply moves the token over one node, and updates its data by
modifying the top elements of stacks, while keeping an underlying
graph unchanged.
When the token passes a $\lambda$-node or a $\oc$-node, a rewrite
flag is changed to $\lambda$ or $\oc$, which triggers rewrite
transitions.
Fig.~\ref{fig:PassTransitions} defines pass transitions, by showing
only the relevant node for each transition.
The position of the token is drawn as a black triangle, pointing
towards the direction of the token.
In the figure, $X \neq \star$, and $n$ is a natural number.
The pass transition over a $\lb{C}_{n+1}$-node pushes the old position
$e$, a link node, to a box stack.
\begin{figure}[t]
 \centering
 \includegraphics[scale=0.8]{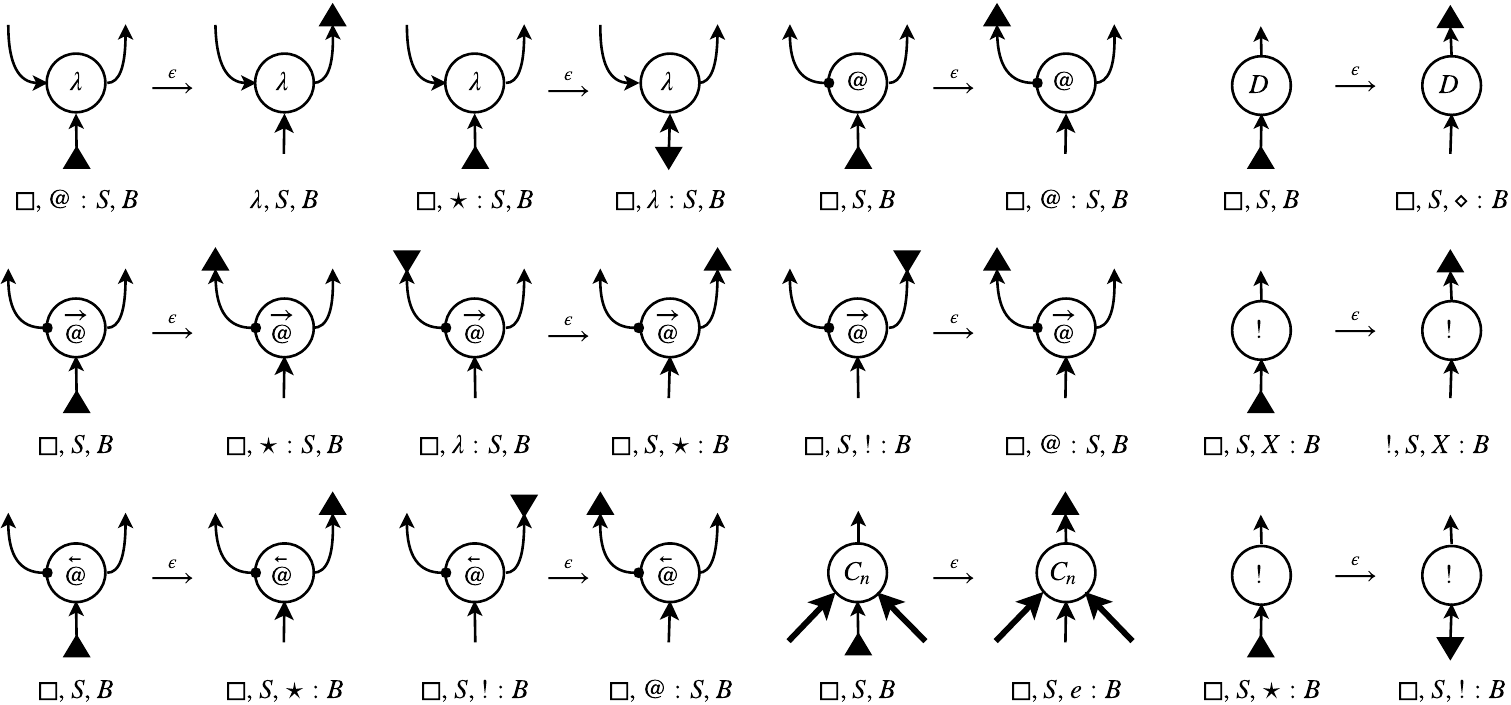}
 \caption{Pass Transitions}
 \label{fig:PassTransitions}
\end{figure}

The way the token reacts to application nodes ($\appLazy{}{}$,
$\appLR{}{}$ and $\appRL{}{}$) corresponds to the way the window
$\W{\cdot}$ moves in evaluating these function applications in the
sub-machine semantics (Fig.~\ref{fig:MockMachine}).
When the token moves on to the composition output of an application
node, the top element of a computational stack is either $@$ or
$\star$.
The element $\star$ makes the token return  from a $\lambda$-node,
which corresponds to reducing the function part of application to a
value (i.e.\ abstraction).
The element $@$ lets the token proceed at a $\lambda$-node, raises the
rewrite flag $\lambda$, and hence triggers a rewrite transition that
corresponds to beta-reduction.
The call-by-value application nodes ($\appLR{}{}$ and $\appRL{}{}$)
send the token to their argument output,  pushing the element
$\star$ to a box stack.
This makes the token bounce at a $\oc$-node and return 
to the application node, which corresponds to evaluating the argument
part of function application to a value.
Finally, pass transitions through $\lb{D}$-nodes, $\lb{C}_n$-nodes and
$\oc$-nodes prepare copying of values, and eventually raise the
rewrite flag $\oc$ that triggers on-demand duplication.

A rewrite transition
$((G,e),(d,f,S,B)) \to_\chi ((G',e'),(d',f',S,B'))$,
labelled with $\chi \in \{ \beta,\sigma,\epsilon \}$, applies to a
state whose rewrite flag is either $\lambda$ or $\oc$.
It changes a specific sub-graph while keeping its interface, changes
the position accordingly, and pops an element from a box stack.
Fig.~\ref{fig:RewriteTransitions} defines rewrite transitions by
showing a sub-graph (``redex'') to be rewritten.
Before we go through each rewrite transition, we note that rewrite
transitions are not exhaustive in general, as a graph may not match a
redex even though a rewrite flag is raised.
However we will see that there is no failure of transitions in
implementing the term calculus.
\begin{figure}[t]
 \centering
 \includegraphics[scale=0.8]{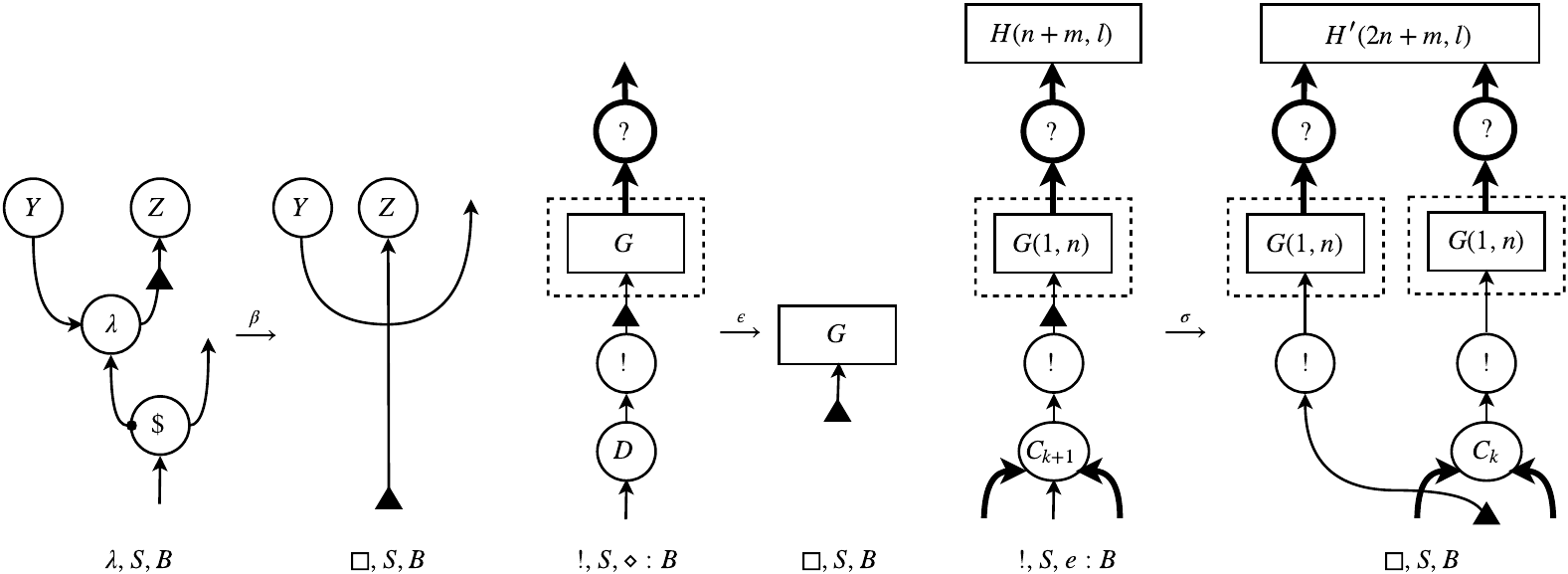}
 \caption{Rewrite Transitions}
 \label{fig:RewriteTransitions}
\end{figure}

The first rewrite transition in Fig.~\ref{fig:RewriteTransitions},
with  label $\beta$, occurs when a rewrite flag is $\lambda$.
It implements beta-reduction by eliminating a pair of an abstraction
node ($\lambda$) and an application node
($\$ \in \{ \appLazy{}{},\appLR{}{},\appRL{}{} \}$ in the figure).
Outputs of the $\lambda$-node are required to be connected to
arbitrary nodes (labelled with $Y$ and $Z$ in the figure), so that
edges between links are not introduced.
The other rewrite transitions are for the rewrite flag $\oc$, and
they together realise the copying process of a sub-graph (namely a
$\oc$-box).
The second rewrite transition in Fig.~\ref{fig:RewriteTransitions},
labelled with $\epsilon$, finishes off each copying process by
eliminating all doors of the $\oc$-box $G$.
It replaces the interface of $G$ with output links of the auxiliary
doors and the input link of the $\lb{D}$-node, which is the new
position of the token, and pops the top element
$\diamond$ of a box stack.
Again, no edge between links are introduced.

\begin{wrapfigure}[11]{r}{.5\hsize}
 \centering
 \vspace{-1ex}
 \includegraphics[scale=0.8]{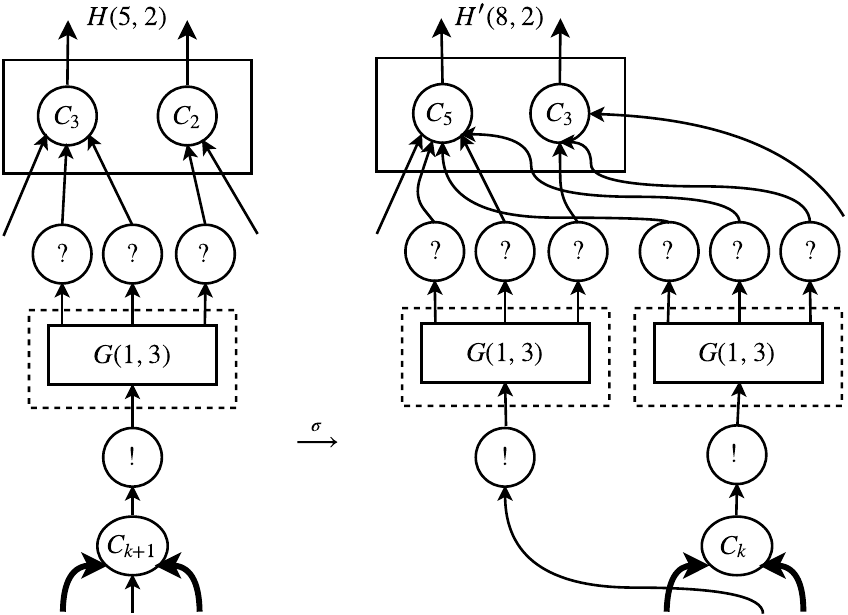}
\end{wrapfigure}
The last rewrite transition in the figure, with label $\sigma$,
actually copies a $\oc$-box.
It requires the top element $e$ of the old box stack to be one of
input links of the $\lb{C}_{k+1}$-node (where $k$ is a natural
number).
The link $e$ is popped from the box stack and becomes the new position
of the token, and the $\lb{C}_{k+1}$-node becomes a
$\lb{C}_k$-node by keeping all the inputs except for the link $e$.
The sub-graph $H(n+m,l)$ consists of $l$ parallel $\lb{C}$-nodes that
altogether have $n+m$ inputs.
Among these inputs, $n$ are connected to auxiliary doors of the
$\oc$-box $G(1,n)$, and $m$ are connected to nodes that are not in the
redex.
The sub-graph $H(n+m,l)$ is turned into $H'(2n+m,l)$ by introducing $n$
inputs to these $\lb{C}$-nodes as follows: if an auxiliary
door of the $\oc$-box $G$ is connected to a $\lb{C}$-node in $H$, two
copies of the auxiliary door are both connected to the corresponding
$\lb{C}$-node in $H'$.
Therefore the two sub-graphs consist of the same number $l$ of
$\lb{C}$-nodes, whose indegrees are possibly increased.
The $m$ inputs, connected to nodes outside a redex, are kept
unchanged.
For example, copying a graph $G(1,3)$ for $H(5,2)$
will give an $H'(8,2)$ as shown above.

All pass and rewrite transitions are well-defined. The following
``sub-graph'' property is essential in time-cost analysis, because it
bounds the size of duplicable sub-graphs (i.e.\ $\oc$-boxes) in an
execution.
\begin{lemma}[Sub-graph property]
 \label{lem:SubGraph}
 For any execution $\Init(G) \to^* \Final((H,e),\delta)$, each
 $\oc$-box of the graph $H$ appears as a sub-graph of the initial
 graph $G$.
\end{lemma}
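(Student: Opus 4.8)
The plan is to prove a stronger statement by induction on the length of the execution: for \emph{every} state $((H,e),\delta)$ reachable from $\Init(G)$, each $\oc$-box of $H$ is isomorphic to a sub-graph of the initial graph $G$. Since the final state is in particular reachable, this immediately yields the lemma. The base case is the initial state, whose underlying graph is $G$ itself, so every $\oc$-box is literally a sub-graph of $G$. For the inductive step I assume the invariant for a state $((H,e),\delta)$ and examine the single transition producing its successor, arguing that the collection of $\oc$-boxes is never enlarged and that every box present in the successor is isomorphic to one already present in $H$, hence by the induction hypothesis to a sub-graph of $G$.

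The case analysis follows the classification of transitions. Pass transitions leave the underlying graph unchanged, so the set of $\oc$-boxes is identical and the invariant is immediate. Among the rewrite transitions, the $\sigma$-labelled one (box copying) is the only transition that creates a new $\oc$-box: it duplicates a box $G'$, and each copy is by construction an isomorphic copy of $G'$, which by the induction hypothesis is isomorphic to a sub-graph of $G$; the accompanying change of $H(n+m,l)$ into $H'(2n+m,l)$ only increases indegrees of contraction nodes \emph{outside} the box and creates no box. The $\epsilon$-labelled door-elimination transition destroys a box by removing its $\oc$- and $\wn$-doors; this produces no new box, and any box nested inside the eliminated one is a sub-box, hence still isomorphic to a sub-graph of $G$. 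The $\beta$-labelled transition removes a $\lambda$-node together with an application node, neither of which is a box door, so it neither creates a box nor alters the door structure of any existing box.

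The one genuinely delicate point, and the main obstacle, is to rule out that a $\beta$- or $\epsilon$-rewrite silently modifies the \emph{interior} of some $\oc$-box: if the redex of such a rewrite lay strictly inside a box, that box would be turned into something that need no longer be a sub-graph of $G$. To exclude this I would prove, by the same induction, the auxiliary invariant that the token position $e$ never lies strictly inside a $\oc$-box. This is precisely what the machine's treatment of boxes is designed to guarantee: the token meets a box only at its principal door, where passing the $\oc$-node raises the rewrite flag $\oc$ and triggers copying ($\sigma$) followed by door elimination ($\epsilon$); the token only ever enters material that has already been opened, i.e.\ that is no longer a box. Granting this positional invariant, the redex of every $\beta$- and $\epsilon$-rewrite sits outside all boxes, so these rewrites affect box interiors only through external reconnection of their doors' edges, leaving the internal node-and-edge structure --- which is what ``sub-graph'' refers to --- intact.

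Finally, the bookkeeping that each link $e'$ stored on the box stack $B$ denotes a genuine input of the relevant $\lb{C}_{k+1}$-node when the $\sigma$-rewrite fires, and that the positions updated by each transition remain well-defined, is routine and can be folded into the same induction without affecting the box-level reasoning above. Interpreting ``appears as a sub-graph'' as ``is isomorphic to a sub-graph,'' the strengthened invariant holds at the final state, and the lemma follows.
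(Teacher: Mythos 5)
Your proposal is correct and takes essentially the same route as the paper: the paper's proof is a two-sentence assertion that rewrite transitions can only copy or discard a $\oc$-box and can never introduce, expand or reduce one, which is precisely the invariant you establish by induction on the execution. Your version merely makes explicit what the paper leaves implicit, namely the case analysis over transitions and the positional fact that the token (and hence any $\beta$- or $\epsilon$-redex) never lies strictly inside a $\oc$-box, so box interiors are never rewritten.
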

\begin{proof}
 Rewrite transitions can only copy or discard a $\oc$-box, and cannot
 introduce, expand or reduce a single $\oc$-box. Therefore, any
 $\oc$-box of $H$ has to be already a $\oc$-box of the initial graph
 $G$.
\end{proof}


When a graph has an edge between links, the token is just passed along.
With this pass transition over a link at hand, the equivalence relation between
graphs that identifies consecutive links with a single
link---so-called ``wire homeomorphism''~\cite{KissingerPhD}---lifts to
a weak bisimulation between graph states.
Therefore, behaviourally, we can safely ignore consecutive links.
From the perspective of time-cost analysis, we benefit from the fact
that rewrite transitions are designed not to introduce any edge between
links.
This means, by assuming that an execution starts with a graph with no
consecutive links, we can analyse time cost of the execution without
caring the extra pass transition over a link.

\section{Implementation of Evaluation Strategies}

The implementation of the term calculus, by means of the dynamic GoI,
starts with translating (enriched) terms into graphs.
The definition of the translation uses multisets of variables, to
track how many times each variable occurs in a term. We assume that terms
are alpha-converted in a form in which all binders introduce distinct
variables.
\begin{notation*}[Multiset]
 The empty multiset is denoted by $\emptyset$, and the sum of two
 multisets $M$ and $M'$ is denoted by $M+M'$.
 We write $x \in^k M$ if the multiplicity of $x$ in a multiset $M$ is
 $k$.
 Removing \emph{all} $x$ from a multiset $M$ yields the multiset
 $M \backslash x$, e.g.\ ${[x,x,y] \backslash x} = [y]$.
 We abuse the notation and refer to a multiset $[x,\ldots,x]$ of a
 finite number of $x$'s, simply as $x$.
\end{notation*}
\begin{definition}[Free variables]
 The map $\FV$ of terms to multisets of variables is inductively
 defined by:
 \begin{align*}
  \FV(x) &:= [x], &
  \FV(\abs{x}{}{t}) &:= \FV(t) \backslash x, \\ 
  \FV(\appGen{t}{u}) &:= \FV(t) + \FV(u), &
  \FV(t[x \leftarrow u]) &:= (\FV(t) \backslash x) + \FV(u).
  \tag{$\appGen{}{} \in \{ \appLazy{}{},\appLR{}{},\appRL{}{} \}$}
 \end{align*}
For a multiset $M$ of variables,
the map $\FV_M$ of
 evaluation contexts to multisets of variables is
 defined by:
 \begin{align*}
  \FV_M(\emptyCtxt) &:= M, &
  \FV_M(\appRL{t}{E}) &:= \FV(t) + \FV_M(E), \\
  \FV_M(\appLazy{E}{t}) &:= \FV_M(E) + \FV(t), &
  \FV_M(\appRL{E}{\plug{A}{v}}) &:= \FV_M(E) + \FV(\plug{A}{v}), \\
  \FV_M(\appLR{E}{t}) &:= \FV_M(E) + \FV(t), &
  \FV_M(E[x \leftarrow t]) &:=
  (\FV_M(E) \backslash x) + \FV(t), \\
  \FV_M(\appLR{\plug{A}{v}}{E}) &:= \FV(\plug{A}{v}) + \FV_M(E), &
  \FV_M(\plug{E'}{x}[x \leftarrow E]) &:=
  (\FV(\plug{E'}{x}) \backslash x) + \FV_M(E).
 \end{align*}
\end{definition}

A term $t$ is said be \emph{closed} if $\FV(t) = \emptyset$.
Consequences of the above definition are the following equations,
where $M'$ is not captured in $E$.
\begin{align*}
 \FV(\plug{E}{t}) = \FV_{\FV(t)}(E), \qquad
 \FV_M(\plug{E}{E'}) = \FV_{\FV_M(E')}(E), \qquad
 \FV_{M+M'}(E) = \FV_M(E) + M'.
\end{align*}

\begin{wrapfigure}[4]{r}{.3\hsize}
\centering
	\includegraphics[scale=0.8]{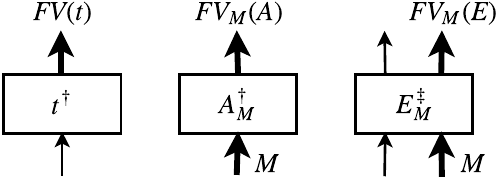}
\end{wrapfigure}
We give translations of terms, answer contexts, and evaluation
contexts separately.
Fig.~\ref{fig:TermAnsCtxtTranslation} and
Fig.~\ref{fig:EvalCtxtTranslation} define
two mutually recursive translations $(\cdot)^\dag$ and $(\cdot)^\ddag$, the 
first one for terms and answer contexts, and the second one for
evaluation contexts.
In the figures,
${\appGen{}{}} \in \{ \appLazy{}{},\appLR{}{},\appRL{}{} \}$,
and $m$ is the multiplicity of $x$.
The general form of the translations is as shown right.

The annotation of bold-stroke edges means each edge of a bunch is
labelled with an element of the annotating multiset, in a one-to-one
manner.
In particular if a bold-stroke edge is annotated by a variable $x$,
all edges in the bunch are annotated by the variable $x$.
These annotations are only used to define the translations, and
are subsequently ignored during execution.

The translations are based on the so-called ``call-by-value''
translation of linear logic to intuitionistic logic (e.g.\
\cite{MaraistOTW99}).
Only the translation of abstraction can be accompanied by a
$\oc$-box, which captures the fact
that only values (i.e.\ abstractions) can be duplicated
(see the basic rule (10) in Fig.~\ref{fig:MockMachine}).
Note that only one $\lb{C}$-node is introduced for each bound
variable. This is vital to achieve constant cost in looking up a
variable, namely in realising the basic rule (9) in
Fig.~\ref{fig:MockMachine}.

\begin{figure}[t]
 \centering
 \includegraphics[scale=0.8]{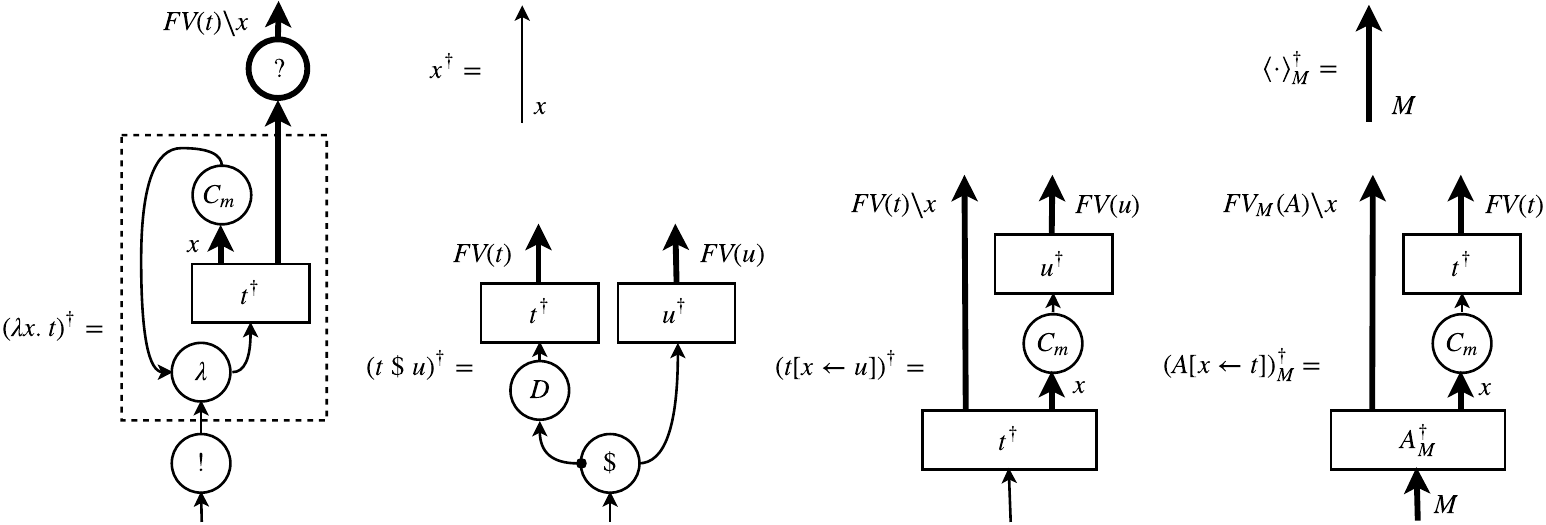}
 \caption{Inductive Translation of Terms and Answer Contexts}
 \label{fig:TermAnsCtxtTranslation}
\end{figure}
\begin{figure}[t]
 \centering
 \includegraphics[scale=0.8]{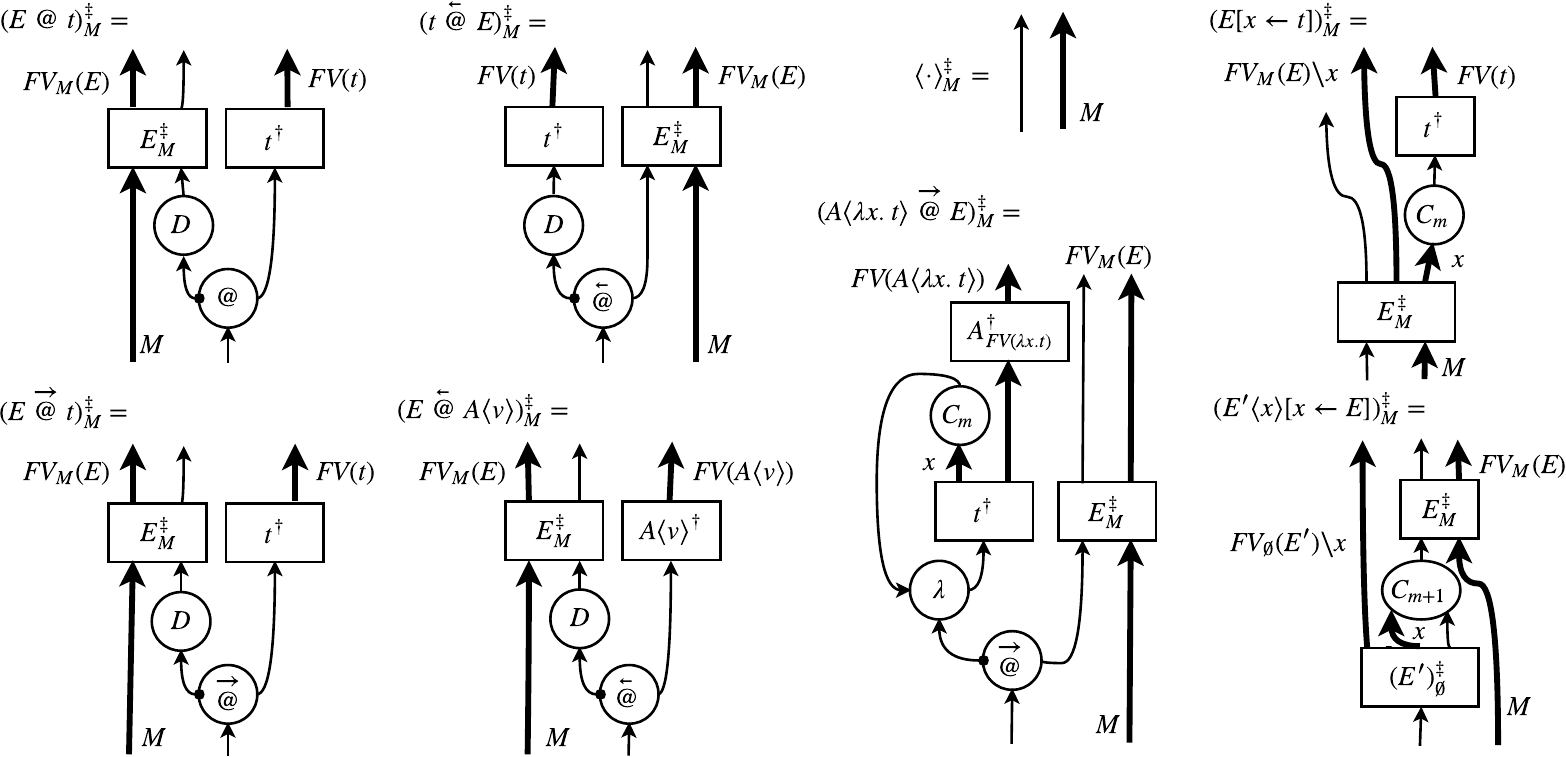}
 \caption{Inductive Translation of Evaluation Contexts}
 \label{fig:EvalCtxtTranslation}
\end{figure}

The two mutually recursive translations $(\cdot)^\dag$ and
$(\cdot)^\ddag$ are related by the following decompositions, which can
be checked by straightforward induction.
In the third decomposition, $M'$ is not captured in $E$.
\begin{center}
 \includegraphics[scale=0.8]{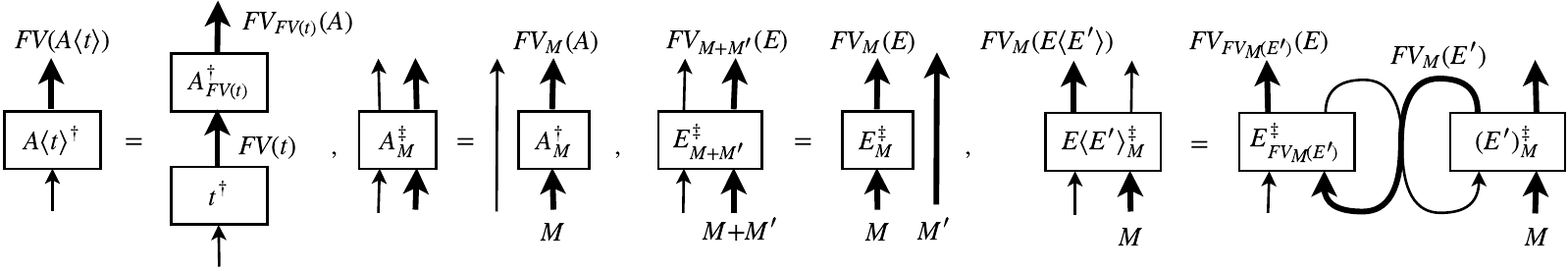}
\end{center}
Note that the decomposition property like the fourth one does not hold
for $\plug{E}{t}$
in general, because a translation
$(\appLR{\plug{A}{\abs{x}{}{t}}}{E})^\ddag_M$ lacks a $\oc$-box
structure, compared to a translation
$(\appLR{\plug{A}{\abs{x}{}{t}}}{u})^\dag$.

The inductive translations lift to
a binary relation between closed enriched terms and graph states.
\begin{definition}[Binary relation $\preceq$]
 The binary relation $\preceq$ is defined by
 $\plug{E}{\W{t}} \preceq
 ((E^\ddag \circ t^\dag,e), (\up,\square,S,B))$, where:
 (i) $\plug{E}{\W{t}}$ is a closed enriched term, and
 $(E^\ddag \circ t^\dag,e)$ is given by
 \parbox[c]{3cm}{\includegraphics[scale=0.8]{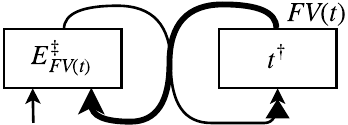}} with no
 edges between links,
 and (ii) there is an execution
 $\Init(E^\ddag \circ t^\dag) \to^*
 ((E^\ddag \circ t^\dag,e),(\up,\square,S,B))$
 such that the position $e$ appears only in the last state of the
 sequence.
\end{definition}
A special case is $\W{t} \preceq \Init(t^\dag)$, which relates the
starting points of an evaluation and an execution.
We require the graph $E^\ddag \circ t^\dag$ to have no edges between
links, which is based on the discussion at the end of
Sec.~\ref{sec:Trans} and essential for time-cost analysis.
Although the definition of the translations relies on edges between links
(e.g.\ the translation $x^\dag$),
we can safely replace any consecutive links in the composition of
translations $E^\ddag$ and $t^\dag$ with a single link, and yield the
graph $E^\ddag \circ t^\dag$ with no consecutive links.

The binary relation $\preceq$ gives a weak simulation of the
sub-machine semantics by the graph-rewriting machine.
The weakness, i.e.\ the extra transitions compared with reductions,
comes from the locality of pass transitions and the bureaucracy of
managing $\oc$-boxes.
\begin{theorem}[Weak simulation with global bound]
 \label{thm:WeakSimulation}
 \noindent
 \begin{enumerate}
  \item If $\plug{E}{\W{t}} \multimap_\chi \plug{E'}{\W{t'}}$ and
	$\plug{E}{\W{t}} \preceq ((E^\ddag \circ t^\dag,e),\delta)$
	hold, then there exists a number $n \leq 3$ and	a graph state
	$(((E')^\ddag \circ (t')^\dag,e'),\delta')$ such that
	$((E^\ddag \circ t^\dag,e),\delta) \to_\epsilon^n \to_\chi
	(((E')^\ddag \circ (t')^\dag,e'),\delta')$
	and $\plug{E'}{\W{t'}} \preceq
	(((E')^\ddag \circ (t')^\dag,e'),\delta')$.
  \item If $\plug{A}{\W{v}} \preceq
	((A^\ddag \circ v^\dag,e),\delta)$ holds,
	then the graph state $((A^\ddag \circ v^\dag,e),\delta)$ is
	initial, from which only the transition
	$\Init(A^\ddag \circ v^\dag) \to_\epsilon
	\Final(A^\ddag \circ v^\dag)$
	is possible.
 \end{enumerate}
\end{theorem}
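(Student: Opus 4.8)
The plan is to prove the two parts separately: part~1 by case analysis on the basic rule underlying the reduction, and part~2 by a short structural argument about the token position. For part~1, I would first use the inference rule defining $\multimap_\chi$ to factor the reduction as $\plug{E}{\W{t}} = \plug{E_0}{\tilde{t}} \multimap_\chi \plug{E_0}{\tilde{u}} = \plug{E'}{\W{t'}}$, where $\tilde{t} \mapsto_\chi \tilde{u}$ is one of the ten basic rules of Fig.~\ref{fig:MockMachine} and $E_0$ is the surrounding, window-free evaluation context. The translation-decomposition identities (provable by induction on contexts, as stated just before the definition of $\preceq$) then let me factor $E^\ddag \circ t^\dag$ as $E_0^\ddag$ composed with the translation of the basic-rule left-hand side $\tilde{t}$, so that the token position $e$ sits inside the sub-translation of $\tilde{t}$ while the outer part $E_0^\ddag$ is left untouched by the simulating transitions. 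This reduces the theorem to ten local lemmas, one per basic rule, each asserting that from the graph state corresponding to the left-hand side the machine performs at most three pass transitions followed by exactly one $\chi$-transition, landing on the graph state corresponding to the right-hand side with the prescribed data.

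The case analysis then splits into three families. For the $\epsilon$-labelled search rules~(1), (3), (4), (6), (7) and~(9), which only move the window across an application or substitution node, I would match each with the corresponding pass transition(s) of Fig.~\ref{fig:PassTransitions}: crossing an $\appLazy{}{}$, $\appLR{}{}$ or $\appRL{}{}$ node, bouncing at a $\oc$-node, or entering an explicit substitution, checking that the updated computation and box stacks and the rewrite flag agree with the data that $\preceq$ prescribes for the new window. Here the call-by-value rules~(4) and~(7) are the ones needing a preparatory pass (first bouncing at the function's $\oc$-node, then redirecting to the argument output), which is what makes the bound $3$ rather than $1$. For the $\beta$-labelled rules~(2), (5) and~(8) the token must reach the $\lambda$-node with $@$ on the computation stack, raising the rewrite flag to $\lambda$, after which the first rewrite transition of Fig.~\ref{fig:RewriteTransitions} eliminates the $\lambda$/application pair and must be checked to produce exactly the translation of $\plug{A}{\W{t}[x \leftarrow u]}$, i.e.\ the delayed substitution. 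The $\sigma$-labelled rule~(10) is the decisive case: the token, sitting at the principal $\oc$-door of the windowed value, passes a $\lb{D}$-node, a $\lb{C}_{k+1}$-node and the $\oc$-node --- three pass transitions that successively raise the $\oc$ flag --- and then the last rewrite transition copies one instance of the $\oc$-box, turning $H$ into $H'$ and leaving a $\lb{C}_k$-node, which I must verify realises $\plug{A}{\plug{E}{\W{v}}[x \leftarrow v]}$.

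Throughout, the subtle point is re-establishing condition~(ii) of $\preceq$, namely that the new position $e'$ is fresh along the extended execution. For pass transitions this holds because the token advances into not-yet-visited nodes, and for the $\beta$ and $\sigma$ rewrites because the rewrite creates the new position link afresh. I expect the $\sigma$ case to be the main obstacle, both in checking that the reassembled contraction $H'$ together with the single copied $\oc$-box matches the translation of the substituted term on the nose, and in confirming that the global bound $n \le 3$ is never exceeded in any of the families.

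For part~2, I would argue as follows. Since $\plug{A}{\W{v}}$ has the window on a value $v$ surrounded only by explicit substitutions, the translation structure forces the position $e$ to be the root $e_0$ of $A^\ddag \circ v^\dag$. But the root already occurs in $\Init(A^\ddag \circ v^\dag)$, so condition~(ii) of $\preceq$ --- that $e$ appears only in the last state of the witnessing execution --- forces that execution to have length $0$; hence the state is exactly the initial one, carrying $\star$ on the box stack. Finally, since the top node of a value translation is the principal $\oc$-door, the unique enabled transition from $\Init$ is the bounce that flips the direction to $\dn$ and replaces $\star$ by $\oc$, i.e.\ $\Init(A^\ddag \circ v^\dag) \to_\epsilon \Final(A^\ddag \circ v^\dag)$, and determinism of the transition rules excludes any other move.
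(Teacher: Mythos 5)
Your skeleton coincides with the paper's proof: part~1 is handled by a case analysis on the ten basic rules, localised through the congruence rule and the four decomposition identities (the paper does exactly this, with one illustrated simulation sequence per rule), and your part~2 argument --- the position must be the root, so condition~(ii) of $\preceq$ forces the witnessing execution to be empty, and the unique next transition is the bounce at the principal $\oc$-door --- is precisely why the paper calls that half straightforward. The problem is that several of the per-case transition sequences you describe contradict the machine as defined, and those verifications are the entire content of the theorem. On the function side of an application the token does \emph{not} bounce at the $\oc$-node: bouncing via a box-stack $\star$ is reserved for the argument side, whereas on the function side the token ``returns from a $\lambda$-node'' via a computation-stack $\star$, which is only possible after the function's $\oc$-box has been opened by the $\epsilon$-labelled box-elimination rewrite (the one involving the $\lb{D}$-node). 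This step is not optional: the paper notes that $(\appLR{\plug{A}{\abs{x}{}{t}}}{E})^\ddag_M$ \emph{lacks} the $\oc$-box present in $(\appLR{\plug{A}{\abs{x}{}{t}}}{u})^\dag$, so if rule~(4) were simulated by pass transitions alone, the resulting graph would still contain that box and could not equal the translation needed to re-establish $\preceq$. The same applies to the $\beta$-rules (2), (5), (8): the $\lambda$-node sits inside the box, so ``reaching the $\lambda$-node with $@$'' costs a flag-raising pass at the $\oc$-node plus the box-elimination rewrite. Consequently the prefix $\to_\epsilon^n$ consists of pass \emph{and rewrite} transitions, not passes only, and the budget $n=3$ is exhausted by rule~(4) and by the $\beta$-rules, not by rules (4) and (7) as you claim.

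Your account of rule~(10) also cannot happen in the machine: ``three pass transitions that successively raise the $\oc$ flag'' is impossible, since pass transitions apply only to states whose rewrite flag is $\square$, so a flag-raising pass must be followed immediately by a rewrite transition. Moreover, the passes over the $\lb{D}$-node and the $\lb{C}_{m+1}$-node (the latter being what pushes onto the box stack the link that the copy rewrite later pops) belong to the simulation of rule~(9), not rule~(10); rule~(10) itself is one flag-raising pass at the principal $\oc$-door followed by the $\sigma$-labelled copy rewrite. Finally, you identify the wrong subtlety: what the ten local lemmas actually need is not freshness of the new position (which follows mechanically from how rewrites and passes move the token) but the alpha-equivalence side conditions the paper spells out --- free variables of $u$, respectively $\plug{A'}{v}$, are not captured by $A$, and $x$ is not captured by $E$ or $E'$ --- without which the decomposition identities cannot be applied; in particular, for rule~(9) this is what guarantees that the token position is genuinely an input of the $\lb{C}_{m+1}$-node, so that the later copy rewrite has its required redex shape.
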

\begin{proof}[Proof outline]
 The second half of the theorem is straightforward. For the first
 half,
 Fig.~\ref{fig:Proof1} and Fig.~\ref{fig:Proof2}
 illustrate how the graph-rewriting machine
 simulates each reduction $\multimap$ of the sub-machine semantics.
 Annotations of edges are omitted.
 The figures altogether include ten sequences of translations $\to$,
 whose only first and last graph states are shown.
 Each sequence simulates a single reduction $\multimap$, and
 is preceded by a number (i.e.\ (1)) that corresponds to a basic rule 
 applied by the reduction (see Fig.~\ref{fig:MockMachine}).
 Some sequences involve equations that apply the four
 decomposition properties of the translations $(\cdot)^\dagger$ and
 $(\cdot)^\ddagger$, which are given earlier in this section.
 These equations rely on the fact that reductions with
 labels $\beta$ and $\sigma$ work modulo alpha-equivalence to avoid
 name captures.
 This means that (i) free variables of $u$ (resp.\
 $\plug{A'}{v}$) are never captured by $A$ in the reduction (2)
 (resp.\ (5) and (8)), (ii) the variable $x$ is never captured by $E$
 or $E'$, and (iii) free variables of $E$ are never captured by $A$.
 Especially in simulation of the reduction (9), the variable $x$ is
 not captured by the evaluation context $E'$, and therefore the first
 token position is in fact an input of the $\lb{C}_{m+1}$-node.
\end{proof}
\begin{figure}[p]
 \centering
 \includegraphics[scale=0.9]{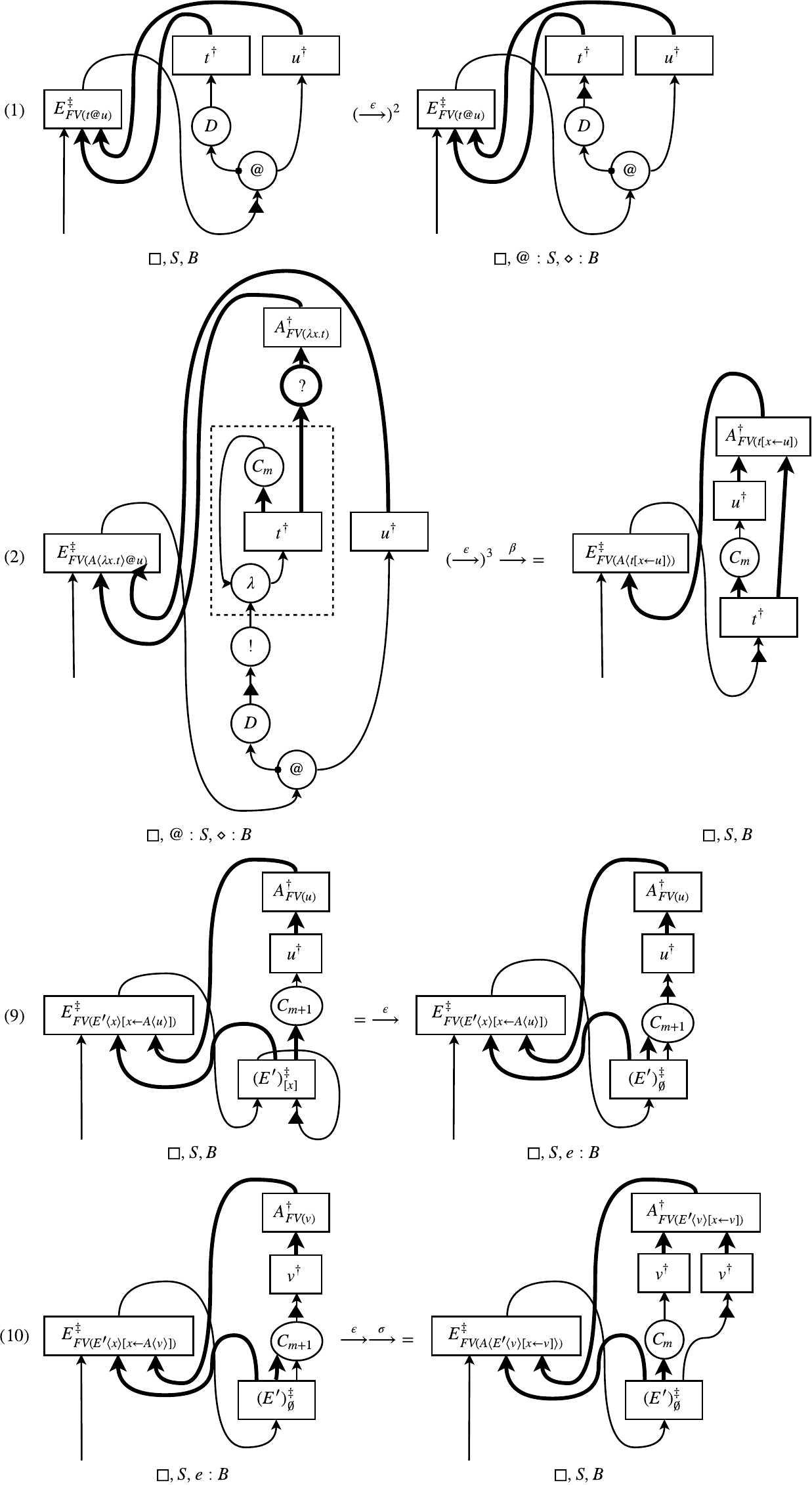}
 \caption{Illustration of Simulation: Call-by-Need Application and
 Explicit Substitutions}
 \label{fig:Proof1}
\end{figure}
\begin{figure}[p]
 \centering
 \includegraphics[scale=1.25]{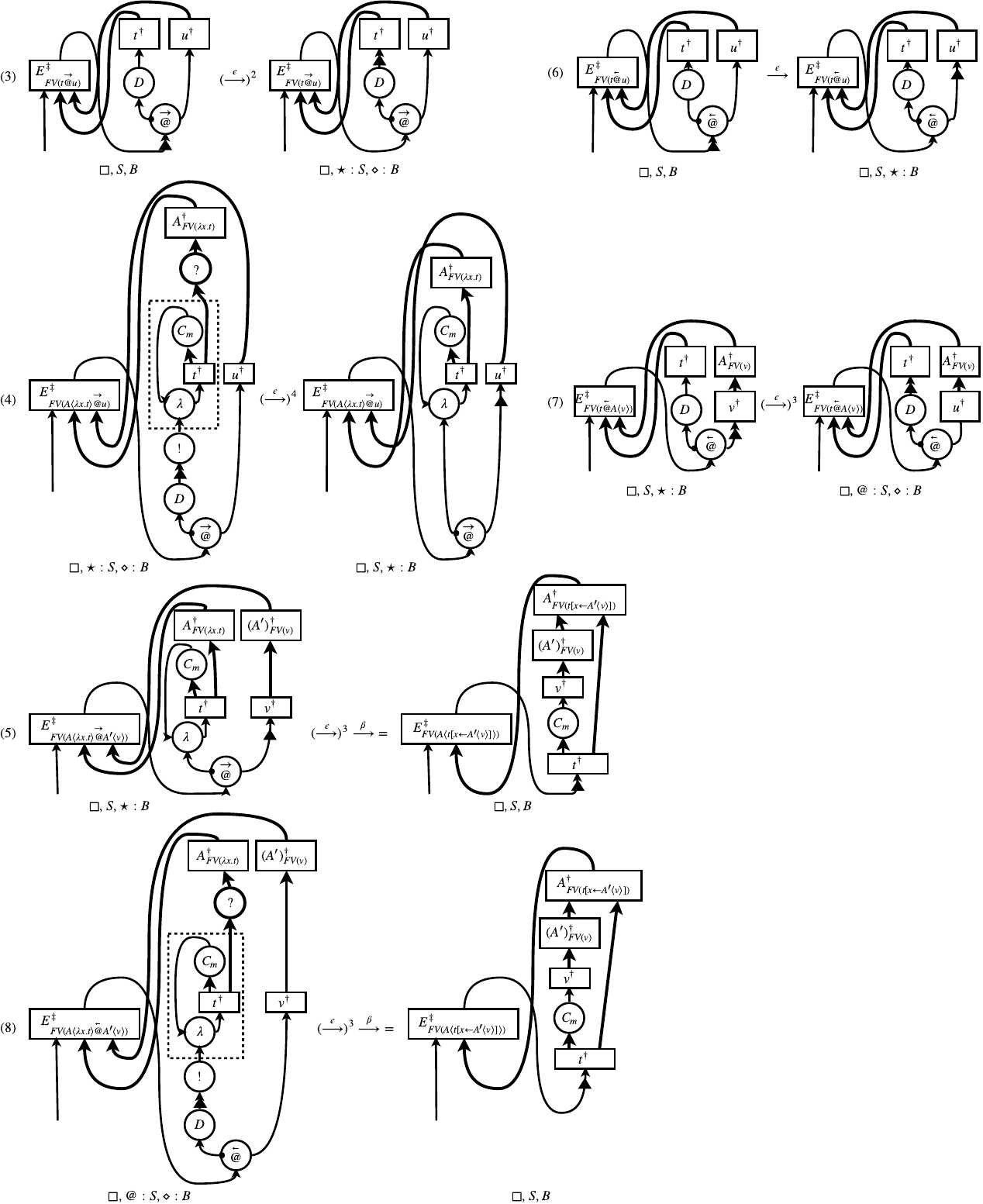}
 \caption{Illustration of Simulation: Left-to-Right and Right-to-Left Call-by-Value Applications}
 \label{fig:Proof2}
\end{figure}

We analyse how time-efficiently the token-guided graph-rewriting
machine implements evaluation strategies, following the methodology
developed by Accattoli et al.\
\cite{AccattoliBM14,AccattoliSC14,Accattoli16}.
The methodology tracks the number of beta-reduction steps
in an evaluation in three steps:
(I) bound the number of transitions required in implementing
evaluation strategies, (II) estimate time cost of each transition, and
(III) bound overall time cost of implementing evaluation strategies,
by multiplying the number of transitions with time cost for each
transition.

Given a pure term $t$, the time cost of an execution on the graph $t^\dag$
is estimated by means of: (i) the number
of reductions labelled with $\beta$ in the evaluation of the term $t$,
and (ii) the \emph{size} $|t|$ of the term $t$, inductively defined as:
$ |x| := 1$,
$ |\abs{x}{}{t}| := |t| + 1$, 
$ |\appLazy{t}{u}| = |\appLR{t}{u}| = |\appRL{t}{u}| := |t| + |u| + 1$,
$ |t[x \leftarrow u]| := |t| + |u| + 1$.

Given an evaluation $\ev$, the number of occurrences of a label $\chi$
is denoted by $|\ev|_\chi$.
The sub-machine semantics comes with the following quantitative
bounds.
\begin{proposition}
 \label{prop:MockMachineBounds}
 For any evaluation $\ev \colon \W{t} \to^* \plug{A}{\W{v}}$ that
 terminates, the number of reductions is bounded by
 $
  |\ev|_\sigma = \mathcal{O}(|\ev|_\beta)$ and $
  |\ev|_\epsilon = \mathcal{O}(|t|\cdot|\ev|_\beta).
 $
\end{proposition}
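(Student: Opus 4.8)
The plan is to prove both bounds by first isolating a term-level \emph{sub-term invariant}, analogous to the sub-graph property (Lemma~\ref{lem:SubGraph}), and then running two separate counting arguments on top of it. The invariant I would establish, by induction on the length of the evaluation with a case analysis on the ten basic rules, is that every value $v$ occurring as the spine of a definition $[x \leftarrow \plug{A}{v}]$ in any enriched term reachable from $\W{t}$ is a copy of a sub-term of the original term $t$, up to $\alpha$-renaming; in particular $|v| \le |t|$. This is immediate because the only rules that create definitions are the $\beta$-rules (2), (5), (8), which either keep an untouched sub-term as the definition or record an already-evaluated value, and the only rule that copies a value, the $\sigma$-rule (10), duplicates such a definition without enlarging it.

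For the first bound $|\ev|_\sigma = \mathcal{O}(|\ev|_\beta)$ I would use a charging argument. Every $\sigma$-step fires on a redex $\plug{E}{x}[x \leftarrow \plug{A}{\W{v}}]$ and, after firing, leaves the window on the freshly substituted $v$, which now occupies an evaluation position determined by $E$. Inspecting the grammar of evaluation contexts, this position is one of three kinds: (a) the function or argument hole of an application node, which therefore lies on the active spine and is consumed by a (necessarily unique) $\beta$-step (2)/(5)/(8) later in the evaluation; (b) the definition hole of a further explicit substitution whose variable is itself demanded, so that the next principal step is again a $\sigma$-step on that outer variable; or (c) the top-level hole, which happens only at termination. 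I would map each type-(a) step to the $\beta$-step that reduces its application node: each application node is reduced by at most one $\beta$-step and is completed by at most two type-(a) steps (one filling its function, one its argument), so this map is at most $2$-to-$1$. I would map each type-(b) step to the $\beta$-step that created the variable-to-variable link it resolves; since resolving the link replaces it by a value, each link is resolved exactly once, so the map is injective. As type-(c) occurs once, this yields $|\ev|_\sigma \le 3|\ev|_\beta + 1$.

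For the second bound $|\ev|_\epsilon = \mathcal{O}(|t| \cdot |\ev|_\beta)$ I would amortise the search steps against the principal steps. The $\beta$-rules change the number of constructors by only a constant, whereas each $\sigma$-step (10) inserts one copy of a value, of size at most $|t|$ by the sub-term invariant; together with the initial $|t|$ constructors this bounds the total amount of structure ever created by $\mathcal{O}(|t|\cdot(|\ev|_\sigma + 1)) = \mathcal{O}(|t|\cdot|\ev|_\beta)$, using the first bound. It then remains to show that the window passes over each created constructor only a constant number of times, so that $|\ev|_\epsilon$ is bounded by the total created structure up to a constant factor. I expect this last step to be the main obstacle: the search rules move the window not only downward (rules (1), (3), (6)) but also sideways between the two children of a call-by-value application (rules (4), (7)) and across to a definition (rule (9)), so a naive per-interval estimate fails. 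The way I would control it is to maintain an invariant on where the window may sit relative to the surrounding evaluation context---essentially that the interior of every created sub-term is traversed at most a constant number of times, cached values being re-entered only at their root to be copied out by rule (10)---and to read off from this invariant a potential that each search step strictly decreases and each principal step raises by at most $\mathcal{O}(|t|)$. Combining the potential bound with the first half then gives the stated $\mathcal{O}(|t|\cdot|\ev|_\beta)$.
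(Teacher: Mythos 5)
Your proposal is correct in its essentials, but it takes a genuinely different route from the paper. The paper's own argument is a proof by reduction to the literature: assuming (as stated in its first sentence) that each term uses a single strategy, it sets up a one-to-one correspondence between sub-machine evaluations and derivations of Accattoli's linear substitution calculus, and then reads the first bound off known LSC results \cite{AccattoliSC14} and the second off the bilinearity theorems for distilled abstract machines \cite[Thm.~11.3 \& Thm.~11.5]{AccattoliBM14}. You instead re-prove both facts directly inside the sub-machine semantics: your sub-term invariant is the term-level counterpart of the sub-graph property (Lem.~\ref{lem:SubGraph}); your three-way classification of where a $\sigma$-step leaves the window (application hole, definition hole of a demanded variable, top level) yields the linear bound with an explicit constant $|\ev|_\sigma \le 3|\ev|_\beta+1$; and your amortisation of $\epsilon$-steps against created structure is essentially the bilinearity argument specialised to this calculus. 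The paper's route buys brevity and alignment with Accattoli's cost taxonomy, at the price of an unstated correspondence proof; your route buys a self-contained, elementary argument with concrete constants. On the step you flag as the main obstacle: the potential you want does exist, but the claim that every search step strictly decreases it requires counting variable occurrences as consumable tokens as well --- the rule~(9) step that re-enters an already-evaluated definition consumes the demanded occurrence of $x$, which is never visited again because the ensuing $\sigma$-step replaces that occurrence by a value. With this accounting every $\epsilon$-step consumes a distinct, never-revisited occurrence drawn from a pool of size $\mathcal{O}(|t|\cdot(1+|\ev|_\sigma))$, and your bound closes. Note finally that your no-revisiting claims (no ping-pong between rules~(4) and~(7), uniqueness of the consuming $\beta$-step) silently use the same single-strategy assumption that the paper invokes explicitly; it should be stated.
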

\begin{proof}[Proof outline]
 A term uses a single evaluation strategy,
   either call-by-need, left-to-right call-by-value, or
 right-to-left call-by-value.
 The proof is by developing the one-to-one correspondence between an
 evaluation by the sub-machine semantics and a ``derivation'' in the
 linear substitution calculus.
 This goes in the same way Accattoli et al.\
 analyse various abstract machines~\cite{AccattoliBM14}, especially
 the proof of the second equation~\cite[Thm.~11.3 \&
 Thm.~11.5]{AccattoliBM14}.
 The first equation is a direct application of the bounds about the
 linear substitution calculus~\cite[Cor.~1 \& Thm.~2]{AccattoliSC14}.
\end{proof}

We use the same notation $|\ex|_\chi$, as for an evaluation, to denote
the number of occurrences of each label $\chi$ in an execution $\ex$.
Additionally the number of rewrite transitions with the label
$\epsilon$ is denoted by $|\ex|_\mathit{\epsilon R}$.
The following proposition completes the first step of
the cost analysis.
\begin{proposition}[Soundness \& completeness, with number bounds]
 \label{prop:GraphRewriterBounds}
 For any pure closed term $t$,
 an evaluation $\ev \colon \W{t} \to^* \plug{A}{\W{v}}$
 terminates with the enriched term $\plug{A}{\W{v}}$ if and only if an
 execution
 $\ex \colon
 \Init(t^\dag) \to^* \Final(A^\ddag \circ v^\dag)$
 terminates with the graph $A^\ddag \circ v^\dag$.
 Moreover the number of transitions is bounded by
$  |\ex|_\beta = |\ev|_\beta$,
$  |\ex|_\sigma = \mathcal{O}(|\ev|_\beta)$,
$  |\ex|_\epsilon = \mathcal{O}(|t|\cdot|\ev|_\beta)$,
$  |\ex|_\mathit{\epsilon R} = \mathcal{O}(|\ev|_\beta)$.
\end{proposition}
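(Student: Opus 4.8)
The plan is to lift the per-step weak simulation of Theorem~\ref{thm:WeakSimulation} to whole evaluations and then read off the four transition counts. First I would establish the ``only if'' direction (completeness) by induction on the length of the evaluation $\ev\colon \W{t} \to^* \plug{A}{\W{v}}$, starting from the special instance $\W{t} \preceq \Init(t^\dag)$ of the relation $\preceq$. In the inductive step, a reduction $\plug{E}{\W{s}} \multimap_\chi \plug{E'}{\W{s'}}$ from a related state is matched, by Theorem~\ref{thm:WeakSimulation}(1), by a block $\to_\epsilon^{n} \to_\chi$ with $n \le 3$ ending in a state that is again $\preceq$-related. Concatenating these blocks along $\ev$ produces an execution $\Init(t^\dag) \to^* ((A^\ddag \circ v^\dag, e), \delta)$ with $\plug{A}{\W{v}} \preceq ((A^\ddag \circ v^\dag, e), \delta)$, and Theorem~\ref{thm:WeakSimulation}(2) shows this state is initial and fires one last $\to_\epsilon$ into $\Final(A^\ddag \circ v^\dag)$, so the execution terminates as claimed.

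For the ``if'' direction (soundness) I would use that both systems are \emph{deterministic}: the graph-rewriting machine by design, and the sub-machine semantics because the window singles out a unique redex. Hence the execution issuing from $\Init(t^\dag)$ is unique. The completeness construction already shows that a non-terminating evaluation is simulated by a non-terminating execution, since every reduction contributes at least one transition; contrapositively, a halting execution cannot be the image of a non-terminating evaluation. As a closed pure term either diverges or reduces to an answer $\plug{A}{\W{v}}$, a halting execution must be the image of such an evaluation, and determinism of the machine forces the halting state to be exactly $\Final(A^\ddag \circ v^\dag)$ for the $A$ and $v$ delivered by $\ev$, giving the biconditional.

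Three of the four counts fall out of the block structure. Each reduction labelled $\beta$, $\sigma$ or $\epsilon$ contributes exactly one transition of the matching label together with at most three $\epsilon$-labelled pass transitions, and the only additional transition is the single closing $\to_\epsilon$ into $\Final$. Reading off labels gives $|\ex|_\beta = |\ev|_\beta$ at once, while $|\ex|_\sigma = |\ev|_\sigma = \mathcal{O}(|\ev|_\beta)$ and $|\ex|_\epsilon = \mathcal{O}(|\ev|_\beta + |\ev|_\sigma + |\ev|_\epsilon) = \mathcal{O}(|t|\cdot|\ev|_\beta)$ follow after substituting the sub-machine bounds $|\ev|_\sigma = \mathcal{O}(|\ev|_\beta)$ and $|\ev|_\epsilon = \mathcal{O}(|t|\cdot|\ev|_\beta)$ of Proposition~\ref{prop:MockMachineBounds}.

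The \emph{main obstacle} is the sharp bound $|\ex|_\mathit{\epsilon R} = \mathcal{O}(|\ev|_\beta)$ on the rewrite transitions labelled $\epsilon$, i.e.\ the door-elimination rule closing a copying process: treating them as ordinary $\epsilon$-transitions only yields the weaker $\mathcal{O}(|t|\cdot|\ev|_\beta)$, so the $|\ex|_\epsilon$ count cannot be reused. Instead I would inspect the simulation diagrams (Fig.~\ref{fig:Proof1}, Fig.~\ref{fig:Proof2}) to localise door-eliminations, arguing that they occur only inside the blocks that simulate value-consuming reductions (the $\beta$- and $\sigma$-blocks) and never inside the search ($\epsilon$) blocks, whose simulations are pure pass transitions. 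The sub-graph property (Lemma~\ref{lem:SubGraph}) then guarantees that each copying process acts on a $\oc$-box already present in $t^\dag$, so it closes after a bounded number of rewrites and each such block contributes $\mathcal{O}(1)$ door-eliminations. Summing gives $|\ex|_\mathit{\epsilon R} = \mathcal{O}(|\ev|_\beta + |\ev|_\sigma) = \mathcal{O}(|\ev|_\beta)$ by Proposition~\ref{prop:MockMachineBounds}.
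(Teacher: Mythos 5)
Your handling of the biconditional and of the first three bounds is correct and is essentially the paper's own argument: concatenate the blocks $\to_\epsilon^{n}\to_\chi$ ($n \le 3$) given by Thm.~\ref{thm:WeakSimulation}, close with its second part, and substitute the bounds of Prop.~\ref{prop:MockMachineBounds}. The genuine gap is in your argument for $|\ex|_\mathit{\epsilon R} = \mathcal{O}(|\ev|_\beta)$, exactly where you located the ``main obstacle.''

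Your localisation claim --- door eliminations occur only in $\beta$- and $\sigma$-blocks, and $\epsilon$-blocks are pure pass transitions --- is false for left-to-right call-by-value. By Fig.~\ref{fig:PassTransitions}, the element $\star$ on the \emph{computation} stack makes the token return from a $\lambda$-node (not from a $\oc$-node; bouncing at a $\oc$-node is what $\star$ on the \emph{box} stack does, and that is reserved for argument evaluation). Hence, when the machine simulates basic rule~(4) of Fig.~\ref{fig:MockMachine} --- an $\epsilon$-labelled reduction --- it can only certify that the function part is an abstraction by first opening its $\oc$-box: the door elimination fires inside this $\epsilon$-block. This is also forced by the bound $n \le 3$: were box opening deferred to the $\beta$-block of rule~(5), that block would have to bounce at the argument's $\oc$-node, traverse the application node, open the function's box and pass the $\lambda$-node before the $\beta$-rewrite, exceeding three $\epsilon$-transitions. (Conversely, $\sigma$-blocks, simulating rule~(10), contain no door elimination: they are a pass over a $\oc$-node followed by the copying rewrite.) With your accounting, these rule-(4) door eliminations land in the $\mathcal{O}(|t| \cdot |\ev|_\beta)$ bucket --- precisely the weakening you set out to avoid. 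The paper's proof uses a block-independent pairing instead: each $\beta$-rewrite is preceded by exactly one $\epsilon$-rewrite, namely the opening of the $\oc$-box of the abstraction that this $\beta$-rewrite consumes, and every door elimination arises this way; hence $|\ex|_\mathit{\epsilon R} = |\ex|_\beta = |\ev|_\beta$. Your argument is repairable in that spirit: door eliminations occur only in blocks for rules (2) and (8) (paired with the $\beta$ in the same block) and in blocks for rule (4), and in a terminating evaluation each rule-(4) step is answered by exactly one rule-(5) $\beta$-step, so the total is still $\mathcal{O}(|\ev|_\beta)$. Finally, Lem.~\ref{lem:SubGraph} plays no role here: it bounds the \emph{size} of $\oc$-boxes, which is what the per-transition cost analysis of Thm.~\ref{thm:CostBounds} needs, not the number of rewrites in a block, which already follows from $n \le 3$.
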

\begin{proof}
 This proposition is a direct consequence of
 Thm.~\ref{thm:WeakSimulation} and Prop.~\ref{prop:MockMachineBounds},
 except for the last bound.
 The last bound of $|\ex|_\mathit{\epsilon R}$ follows from the fact
 that
 each rewrite transition labelled with $\beta$ is always preceded by
 one rewrite transition labelled with $\epsilon$.
\end{proof}

The next step in the cost analysis is to estimate the time cost of each transition.
We assume that graphs are implemented in the following way.
Each link is given by two pointers to its child and its parent, and
each node is given by its label and pointers to its outputs.
Abstraction nodes ($\lambda$) and
application nodes ($\appLazy{}{}$, $\appLR{}{}$ and
$\appRL{}{}$) have two pointers that are distinguished, and all
the other nodes have only one pointer to their unique output.
Additionally each $\oc$-node has pointers to inputs of its
associated $\wn$-nodes, to represent a $\oc$-box structure.
Accordingly, a position of the token is a pointer to a link, a
direction and a rewrite flag are two symbols, a computation stack is a
stack of symbols, and finally a box stack is a stack of symbols and
pointers to links.

Using these assumptions of implementation, we estimate time cost of
each transition.
All pass transitions have constant cost.
Each pass transition looks up one node and its outputs (that
are either one or two) next to the current position, and involves
a fixed number of elements of the token data.
Rewrite transitions with the label $\beta$ have constant cost,
as they change a constant number of nodes and links, and only a
rewrite flag of the token data.
Rewrite transitions with the label $\epsilon$ remove a
$\oc$-box structure, and hence have cost bounded
by the number of the auxiliary doors.
Finally, rewrite transitions with the label $\sigma$ copy a $\oc$-box
structure.
Copying cost is bounded by the size of the $\oc$-box, i.e.\ the
number of nodes and links in the $\oc$-box.
Updating cost of the sub-graph $H'$ (see
Fig.~\ref{fig:RewriteTransitions}) is bounded by the number of
auxiliary doors, that is less than the size of the copied
$\oc$-box.
The assumption about the implementation of graphs enables us to
conclude updating cost of the $\lb{C}$-node is constant.

With the results of the previous two steps, we can now give the
overall time cost of executions.
\begin{theorem}[Soundness \& completeness, with cost bounds]
 \label{thm:CostBounds}
 For any pure closed term $t$,
 an evaluation $\ev \colon \W{t} \to^* \plug{A}{\W{v}}$
 terminates with the enriched term $\plug{A}{\W{v}}$ if and only if an
 execution
 $\ex \colon
 \Init(t^\dag) \to^* \Final(A^\ddag \circ v^\dag)$
 terminates with the graph $A^\ddag \circ v^\dag$.
 The overall time cost of the execution $\ex$ is bounded by
 $\mathcal{O}(|t|\cdot|\ev|_\beta)$.
\end{theorem}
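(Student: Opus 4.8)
The plan is to assemble the three steps of Accattoli's methodology, most of whose ingredients are already in place. Step~(I), bounding the number of transitions, together with the soundness-and-completeness equivalence, is precisely the content of Prop.~\ref{prop:GraphRewriterBounds}: I would invoke it directly for the `if and only if' and for the four counts $|\ex|_\beta = |\ev|_\beta$, $|\ex|_\sigma = \mathcal{O}(|\ev|_\beta)$, $|\ex|_\epsilon = \mathcal{O}(|t|\cdot|\ev|_\beta)$ and $|\ex|_\mathit{\epsilon R} = \mathcal{O}(|\ev|_\beta)$. Step~(II), the per-transition cost estimate, is exactly the per-transition discussion given just before the theorem statement. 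The only work left for this theorem is step~(III): multiply each count by the corresponding per-transition cost and sum the results.

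Writing the total cost as a sum over the four transition classes, the two classes with constant per-transition cost---pass transitions (all labelled $\epsilon$, numbering at most $|\ex|_\epsilon$) and rewrite transitions labelled $\beta$ (numbering $|\ex|_\beta$)---contribute at most $\mathcal{O}(1)\cdot|\ex|_\epsilon = \mathcal{O}(|t|\cdot|\ev|_\beta)$ and $\mathcal{O}(1)\cdot|\ex|_\beta = \mathcal{O}(|\ev|_\beta)$ respectively. The remaining two classes, rewrite transitions labelled $\sigma$ (which copy a $\oc$-box) and the rewrite transitions counted by $|\ex|_\mathit{\epsilon R}$ (which dismantle a $\oc$-box), are the ones whose per-transition cost is \emph{not} constant: each costs proportionally to the size of the $\oc$-box it touches, together with the constant-cost update of a single $\lb{C}$-node.

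The crux is therefore to turn this box-size-dependent cost into a uniform bound, and this is where the sub-graph property does the decisive work. By Lemma~\ref{lem:SubGraph}, every $\oc$-box arising anywhere in the execution already occurs as a sub-graph of the initial graph $t^\dag$; rewriting only copies or discards boxes and never enlarges one. Hence the size of any box handled by a $\sigma$- or $\epsilon R$-transition is bounded by the size of $t^\dag$. It then remains to observe that the translation is size-linear, $|t^\dag| = \mathcal{O}(|t|)$, which I would check by a routine induction on $t$, since each term constructor contributes only a bounded number of nodes and links (the one $\lb{C}$-node per bound variable keeps the total indegree bounded by the number of variable occurrences). Consequently every $\sigma$- and every $\epsilon R$-transition costs $\mathcal{O}(|t|)$, so these two classes contribute $\mathcal{O}(|t|)\cdot|\ex|_\sigma = \mathcal{O}(|t|\cdot|\ev|_\beta)$ and $\mathcal{O}(|t|)\cdot|\ex|_\mathit{\epsilon R} = \mathcal{O}(|t|\cdot|\ev|_\beta)$.

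Summing the four contributions yields the claimed overall bound $\mathcal{O}(|t|\cdot|\ev|_\beta)$. I expect the main obstacle to be conceptual rather than computational: recognising that the \emph{a priori} global bound from Lemma~\ref{lem:SubGraph} is exactly what prevents box sizes from accumulating across the execution---copying increases the indegree of $\lb{C}$-nodes but leaves the boxes themselves untouched---so that the local per-transition estimates of step~(II) can be aggregated against the single fixed size $\mathcal{O}(|t|)$ rather than against a quantity that might otherwise grow during evaluation.
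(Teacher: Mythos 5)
Your proposal is correct and follows essentially the same route as the paper's own proof: invoke Prop.~\ref{prop:GraphRewriterBounds} for the equivalence and the transition counts, use the per-transition cost estimates, and crucially apply Lemma~\ref{lem:SubGraph} to bound every $\oc$-box touched by a $\sigma$- or $\epsilon$-labelled rewrite by the size of the initial graph $t^\dag$, itself $\mathcal{O}(|t|)$, before summing the contributions of the transition classes. Your write-up is somewhat more explicit than the paper's (which compresses the summation into two sentences), but no step or idea differs in substance.
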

\begin{proof}
 Non-constant cost of rewrite transitions are either the number of
 auxiliary doors of a $\oc$-box or the size of a $\oc$-box.
 The former can be bounded by the latter, which is no more than the
 size of the initial graph $t^\dag$, by Lem.~\ref{lem:SubGraph}.
 The size of the initial graph $t^\dag$ can be bounded by the size
 $|t|$ of the initial term.
 Therefore any non-constant cost of each rewrite transition, in the
 execution $\ex$, can be also bounded by $|t|$.
 The overall time cost of rewrite transitions labelled with $\beta$ is
 $\mathcal{O}(|\ev|_\beta)$, and that of the other rewrite transitions
 and pass transitions is $\mathcal{O}(|t|\cdot|\ev|_\beta)$.
\end{proof}

Thm.~\ref{thm:CostBounds} classifies the graph-rewriting machine as
``efficient,'' by Accattoli's taxonomy~\cite[Def.~7.1]{Accattoli16} of
abstract machines.
The efficiency benefits from the graphical representation of
environments (i.e.\ explicit substitutions in our setting).
In particular the translations $(\cdot)^\dag$ and $(\cdot)^\ddag$ are
carefully designed to exclude any two sequentially-connected
$\lb{C}$-nodes, which yields the constant cost to look up a bound
variable and its associated computation in environments.

\section{Related Work and Conclusions}

In an abstract machine of any functional programming language,
computations assigned to variables have to be stored for later use.
Potentially multiple, conflicting, computations can be assigned to a
single variable, primarily because of
multiple uses of a function with different arguments.
Different solutions to this conflict lead to
different representations of the storage, some of which are
examined by Accattoli and Barras~\cite{AccattoliB17} from the
perspective of time-cost analysis.
We recall a few solutions below that seem relevant to our token-guided
graph-rewriting.

One solution is to allow at most one assignment to each variable. This
is typically achieved by renaming bound variables during execution,
possibly symbolically. Examples for call-by-need evaluation are
Sestoft's abstract machines~\cite{Sestoft97}, and the storeless and
store-based abstract machines studied by Danvy and
Zerny~\cite{DanvyZ13}.
Our graph-rewriting abstract machine gives another example, as shown
by the simulation of the sub-machine semantics that resembles the
storeless abstract machine mentioned above. Variable renaming is
trivial in our machine, thanks to the use of graphs in which variables
are represented by mere edges.

Another solution is to allow multiple assignments to a variable,
with restricted visibility. The common approach is to pair a sub-term
with its own ``environment'' that maps its free variables to their
assigned computations, forming a so-called ``closure.'' Conflicting
assignments are distributed to distinct localised
environments. Examples include Cregut's lazy variant~\cite{Cregut07}
of Krivine's abstract machine for call-by-need evaluation, and
Landin's SECD machine~\cite{Landin64} for call-by-value evaluation.
Fern{\'{a}}ndez and Siafakas~\cite{FernandezS09} refine this approach
for call-by-name and call-by-value evaluations,
based on closed reduction~\cite{FernandezMS05}, which restricts
beta-reduction to closed function arguments.
This suggests that the approach with localised environments can be
modelled in our setting by implementing closed reduction. The
implementation would require an extension of rewrite transitions
and a different strategy to trigger them, namely to eliminate
auxiliary doors of a $\oc$-box.

Additionally, Fern{\'{a}}ndez and Siafakas~\cite{FernandezS09} propose
another approach to multiple assignments, in which multiple
assignments are augmented with binary strings so that each occurrence
of a variable can only refer to one of them. This approach is based on
a GoI-style token-passing abstract machine for call-by-value
evaluation, designed by Fern{\'{a}}ndez and Mackie~\cite{FernandezM02}.
The machine keeps the underlying graph fixed during execution and
allows the token to jump along a path in the graph. It therefore
recovers time efficiency, although no quantitative analysis is
provided.
Jumps can be seen as a form of graph rewriting that eliminates nodes.
Some jumps are to or from edges with an index that are effectively
``virtual'' copies of edges.

To wrap up, we presented a graph-rewriting abstract machine, with token passing as
a guide, that can time-efficiently implement three evaluation
strategies
that have different control over caching intermediate results.
The idea of using the token as a guide of graph rewriting was also proposed
by Sinot~\cite{Sinot05,Sinot06} for interaction nets.
He shows how using a token can make the rewriting system implement the
call-by-name, call-by-need and call-by-value evaluation strategies.
Our development in this work can be seen as a realisation of the
rewriting system as an abstract machine, in particular with explicit
control over copying sub-graphs.
The GoI-style token passing itself has been adapted to implement the
call-by-value evaluation strategy.
Apart from the abstract machine with jumps~\cite{FernandezM02} already
mentioned, known
adaptations~\cite{Schoepp14b,HoshinoMH14} commonly use the CPS
transformation~\cite{Plotkin75}, with the focus on
correctness. However this method naively leads to an abstract machine
with inefficient overhead cost~\cite{HoshinoMH14}. 

The token-guided graph rewriting is a flexible framework in which
we can carry out the study of space-time trade-off in abstract machines
for various evaluation strategies of the lambda-calculus. Starting
with the previous work~\cite{MuroyaG17} and continuing with the
present work, our focus
was primarily on time-efficiency. This is to 
complement existing work on GoI-style operational semantics which
usually achieves space-efficiency, and also to confirm that
introduction of graph rewriting to the semantics does not bring in any
hidden inefficiencies.
We believe that more refined strategies of interleaving token 
routing and graph reduction can be formulated to serve particular
objectives in the space-time execution efficiency trade-off.

\paragraph{\textbf{Acknowledgement}}
We thank Steven Cheung for helping us implement the on-line visualiser.



\bibliographystyle{eptcs}
\bibliography{ref}
\end{document}